\colorlet{citeblue}{blue!50!black}
\colorlet{linkred}{red!50!black}
\newtheorem{lemma}{Lemma}
\newtheorem{proposition}[lemma]{Proposition}
\newtheorem{assumption}[lemma]{Assumption}
\newcommand*{\ctrl}{\boldsymbol{\theta}}
\newcommand*{\ctrldomain}{\Theta}
\newcommand{\argmax}{\arg\max}
\newcommand{\expm}[1]{#1}
\algrenewcommand\algorithmicindent{0.4em}
\newcommand\copyrighttext{%
  \footnotesize \textcopyright 2024 IEEE. Personal use of this material is permitted.
  Permission from IEEE must be obtained for all other uses, in any current or future
  media, including reprinting/republishing this material for advertising or promotional
  purposes, creating new collective works, for resale or redistribution to servers or
  lists, or reuse of any copyrighted component of this work in other works.}
\newcommand\copyrightnotice{%
\begin{tikzpicture}[remember picture,overlay]
\node[anchor=south,yshift=10pt] at (current page.south) {\parbox{\dimexpr\textwidth-\fboxsep-\fboxrule\relax}{\copyrighttext}};
\end{tikzpicture}%
}
\title{\LARGE \bf
Lipschitz Safe Bayesian Optimization for Automotive Control
}
\author{Johanna Menn$^{1}$, Pietro Pelizzari$^{2}$, Michael Fleps-Dezasse$^{2}$, Sebastian Trimpe$^{1}$
\thanks{$^{1}$J.~Menn and S.~Trimpe are with the Institute for Data Science in Mechanical Engineering (DSME), RWTH Aachen University, 52068 Aachen, Germany. E-mail:
\{johanna.menn, trimpe\}@dsme.rwth-aachen.de}%
\thanks{$^{2}$P.~Pelizzari and M.~Fleps-Dezasse are with ZF Engineering Solutions, ZF Friedrichshafen AG, 88045 Friedrichshafen, Germany. E-mail: \{pietro.pelizzari, michael.fleps-dezasse\}@zf.com}%
\thanks{This work was performed in part within the Helmholtz School for Data Science in Life, Earth and Energy (HDS-LEE), and in part funded by the German Federal Ministry for Economic Affairs and Climate Action (BMWK) through the project EEMotion and by the Deutsche Forschungsgemeinschaft (DFG, German Research Foundation) under Germany's Excellence Strategy -- EXC-2023 Internet of Production -- 390621612. Computations were performed with computing resources granted by RWTH Aachen University under project rwth1576.}
}
\begin{document}

\maketitle
\copyrightnotice
\thispagestyle{empty}
\pagestyle{empty}

\begin{abstract}

Controller tuning is a labor-intensive process that requires human intervention and expert knowledge. Bayesian optimization has been applied successfully in different fields to automate this process. However, when tuning on hardware, such as in automotive applications, strict safety requirements often arise.  
To obtain safety guarantees, many existing safe Bayesian optimization methods rely on assumptions that are hard to verify in practice. This leads to the use of unjustified heuristics in many applications, which invalidates the theoretical safety guarantees. 
Furthermore, applications often require multiple safety constraints to be satisfied simultaneously. Building on recently proposed Lipschitz-only safe Bayesian optimization, we develop an algorithm that relies on 
readily interpretable assumptions and satisfies multiple safety constraints at the same time.
We apply this algorithm to the problem of automatically tuning a trajectory-tracking controller of a self-driving car. 
Results both from simulations and an actual test vehicle underline the algorithm's ability to learn tracking controllers without leaving the track or violating any other safety constraints.

\end{abstract}

\section{INTRODUCTION} 
Tuning controllers in the automotive domain is a time-consuming task that requires expert knowledge. Different cars have individual dynamic behaviors due to varying configurations of drive trains, wheels, and damping components. Therefore, fine-tuning is necessary when deploying controllers for different vehicles. Automating this task can markedly reduce time and cost of application engineering.

A popular method for automatic controller tuning is Bayesian optimization (BO) \cite{Alonso2020, stenger2022benchmark, paulson2023tutorial}.
This black-box optimization approach is sample efficient, which makes it suitable for controller tuning, where experiments are expensive. The goal of BO is to find an optimum of an unknown objective function given only noisy observations of that function. The objective function is sequentially queried based on a sample selection criterion to find the optimum. In controller tuning, the objective function can be a performance measure, which generally depends on the controller parameters and can be flexibly specified based on the control loop requirements. In each optimization step, an experiment is conducted to determine the performance of a specific parameter set. 

When tuning on hardware, it is often vital to respect multiple safety requirements.
These safety constraints must be considered in the tuning process and must not be violated in any experiment. 
Safe BO accounts for such constraints in the optimization and only queries parameter sets that are believed to be safe.

Conceptually, any learning method with safety guarantees has to rely on assumptions, including safe BO.

While assumptions by their very nature typically can not be \emph{proven}, practitioners have to be able to \emph{judge} whether an assumption is realistic or not for a specific application.  In particular, this requires assumptions to be 
interpretable by domain experts. 
However, popular safe BO algorithms require theoretical assumptions on the objective function, which are not readily interpretable.
In existing applications of safe BO, this often leads to using heuristics, invalidating the theoretical safety guarantees. 
Furthermore, deterministic guarantees are often preferred in safety-critical applications, while common safe BO provides probabilistic safety.

\begin{figure}
    \centering
    \input{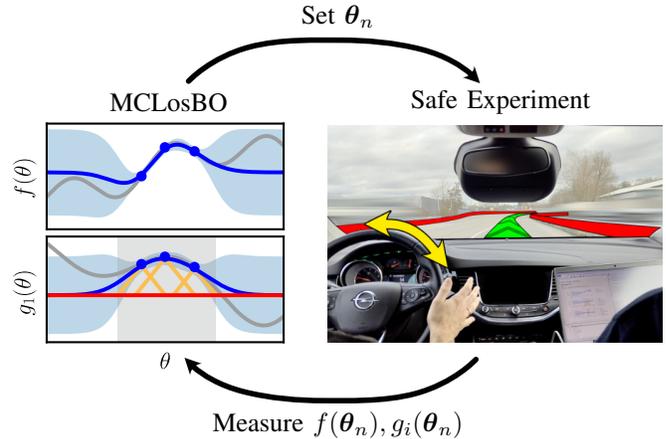}
    \caption{MCLoSBO is used to tune the parameters $\ctrl$ of a lateral controller that is steering the vehicle (yellow arrow) to track the target trajectory (green) as closely as possible. The tracking performance is measured by $f(\ctrl)$. The optimization is sequential: MCLoSBO sets new safe parameters $\ctrl_n$ that do not violate the safety constraints described by $g_i(\ctrl)$ (red). After one lap on the test track, the corresponding values of the performance function $f(\ctrl_n)$ and safety functions $g_i(\ctrl_n)$ are measured with noise.}
    \label{fig:summary}%
    \vspace{-0.5cm}
\end{figure}%

In this work, we present \emph{Multiple Constraints Lipschitz-only Safe Bayesian Optimization (MCLoSBO)}, a safe BO method that considers multiple safety constraints and is based on deterministic and interpretable assumptions. In particular, the safety guarantees rely on common Lipschitz assumptions on the unknown constraints and objective functions, as well as worst-case noise bounds. Our method can be applied to different controller tuning tasks with safety requirements. 

For the demonstration of MCLoSBO, we focus on tuning a lateral trajectory tracking controller that steers an autonomous vehicle (see Fig.~\ref{fig:summary}). We apply the developed method in simulation to study the effects of the practical adaptations, i.e. hyperparameter optimization and the use of batch optimization, and compare it against the state-of-the-art algorithm SafeOpt-MC. Finally, we apply the method to tune the parameters of the controller on a test vehicle. We show the initial controller can be safely improved in simulation and on the vehicle. In summary, the contributions of this work are:
\begin{itemize}
\item the MCLoSBO algorithm, which provides guarantees on satisfaction of multiple safety constraints during tuning while relying on interpretable assumptions (Lipschitz constants and noise bounds);
\item a simulation benchmark which shows that MCLoSBO can safely tune parameters and perform on par with the state-of-the-art; and
\item the first real-world implementation of a LoSBO-type algorithm, in which we safely tune a lateral controller in an autonomous car.
\end{itemize}

\section{Related Work}
\label{sec:related_work}
Bayesian Optimization (BO) \cite{garnett} has been used to automate controller tuning in many domains, see \cite{Alonso2020, stenger2022benchmark, paulson2023tutorial}  
and references therein. 
It has also been applied successfully to automotive applications to tune parameters of trajectory tracking and path following controllers \cite{fröhlich, Rupenyan2021, Wischnewski2019}.  
While \cite{fröhlich, Rupenyan2021} apply different types of BO (contextual and constrained) for problems that differ from the one discussed in this paper, \cite{Wischnewski2019} also uses safe BO. 
In contrast to this work, only one parameter is tuned in \cite{Wischnewski2019} on a race car, and some safety violations are reported.
In this work, we develop a new method for safe BO and demonstrate it by tuning a trajectory tracking controller on a real test vehicle. 

Safe BO algorithms aim to find the optimum of the target functions without violating given safety constraints in any optimization step. In contrast, in constrained BO, constraints can be violated during the optimization and must only be fulfilled for the final optimization result \cite{Kim2021}. 
The present safe BO setting was introduced in \cite{sui15} and addresses safe learning with respect to a single safety constraint on the target function. 
Many extensions build upon the original SafeOpt, including GOOSe \cite{Turchetta2019}, SafeLineBO \cite{kirschner2019adaptive}, ISE \cite{Bottero2022}, GoSafeOpt \cite{Sukhija2023}. 
SafeOpt-MC \cite{Berkenkamp2020} and StageOpt \cite{Sui2018} introduce extensions of SafeOpt to multiple safety constraints. 

Safety guarantees in SafeOpt \cite{sui15} and its variants rely on correct frequentist uncertainty bounds \cite{fiedler2024safety}, which must cover the graph of the target function with high probability. Moreover, a known Lipschitz constant of the unknown functions and sub-Gaussian observation noise are typically assumed (with the exception of \cite{Berkenkamp16}, which only relies on uncertainty bounds and sub-Gaussian noise).
Determining correct uncertainty bounds is challenging in practice \cite{fiedler}.
While several uncertainty bounds are available, e.g., \cite{srinivas, chowdhury, fiedler}, they are all based on a bound on the norm of the unknown function in a reproducing kernel Hilbert space (RKHS). 
However, this assumption is hard to interpret or even validate for unknown target functions \cite{fiedler, fiedler2024safety}.   
Therefore, to the best of our knowledge, all applications of safe BO skip the determination of the RKHS norm bound and use heuristics instead, e.g. \cite{Berkenkamp16, baumann2021, König2021, doersche2021}.
While this procedure yields cautious exploration behavior and may work well in practice, such heuristics invalidate the theoretical safety \emph{guarantees} \cite{fiedler2024safety}.
In contrast, this work builds on a recently proposed algorithm LoSBO \cite{fiedler2024safety}, whose safety guarantees are based on a known Lipschitz constant only, but circumvents the need for a computable RKHS norm bound. The Lipschitz constant, in combination with bounded noise, leads to \emph{deterministic} safety guarantees.  
To obtain such guarantees for \emph{multiple} constraints, this paper extends LoSBO, leveraging ideas from SafeOpt-MC \cite{Berkenkamp2020}.
Finally, this work is the first to apply Lipschitz-only Safe BO on a safety-critical real-world problem.

\section{Preliminaries}
We now introduce the necessary preliminaries, provide further details on safe BO, and state the objective of this paper. 
\subsection{Gaussian processes}
In BO, Gaussian processes (GPs) are often used as surrogates for the objective function $f$. 
 GPs are stochastic processes that can be used for nonparametric regression. A GP $h$ is uniquely defined by its mean function $m(\ctrl)=\mathbb{E}[h(\ctrl)]$ and the covariance function $k(\ctrl, \ctrl')=\mathbb{E}[(h(\ctrl)-m(\ctrl))(h(\ctrl')-m(\ctrl'))]$. 
To approximate an unknown function, we start with a GP prior that can be updated with data $\mathcal{D}_n=\{(\ctrl_0, y_0), ...,(\ctrl_n, y_n)\}$. 
This data is obtained by observing the unknown function with identically and independently distributed Gaussian noise $\epsilon \sim \mathcal{N}(0, \sigma^2)$ such that $y_i = f(\ctrl_i)+\epsilon_i$, $i=0,\ldots,n$.  
We denote the vector of noisy observations as $\mathbf{y}=[y_0, ..., y_n]^T$. 
Without loss of generality, we can assume that the mean function is zero, as data can always be transformed accordingly. 
The posterior mean, posterior covariance, and posterior variance are then 
$\mu_n(\ctrl)=\mathbf{k}_n(\ctrl)^T(\mathbf{K}_n + \sigma^2\mathbf{I}_n)^{-1} \mathbf{y}_n$, 
$k_n(\ctrl, \ctrl')=k(\ctrl, \ctrl')-\mathbf{k}_n(\ctrl)^T(\mathbf{K}_n+\sigma^2\mathbf{I}_n)^{-1}\mathbf{k}_n(\ctrl')$ 
and $\sigma_n^2(\ctrl)=k_n(\ctrl, \ctrl)$. 
The entries of the kernel matrix $\mathbf{K}_n \in \mathbb{R}^{n \times n}$ are $[k(\ctrl, \ctrl')]_{\ctrl, \ctrl' \in \mathcal{D}_n}$, the vector $\mathbf{k}_n(\ctrl)=[k(x_0,x) ... k(x_n, x)]$ describes the covariances between $\ctrl$ and the observed data points, and $\mathbf{I}_n$ is the $n \times n$ identity matrix. We refer to \cite{rasmussen} for a more detailed description.

\subsection{Bayesian optimization}
BO \cite{garnett} is a sequential global black-box optimization method.
The goal is to find the maximum of an unknown function $f:\ctrldomain \rightarrow \mathbb{R}$, i.e.
\begin{equation} 
\label{eq:optproblem}
\ctrl^* = \argmax_{\ctrl \in \ctrldomain}{f(\ctrl)}, 
\end{equation}
where $\ctrldomain \subseteq \mathbb{R}^d$ is the input domain of the unknown function and the decision variable is $\ctrl \in \ctrldomain$. Usually, BO is applied to problems where the evaluation of the target function is expensive, so the number of possible evaluations is limited to a fixed budget of  $N \in \mathbb{N}^+$ iterations. 
A BO algorithm consists of two main components, a surrogate model, often a GP, and the acquisition function $\alpha$.  In each optimization step $n$, the next query $\ctrl_n$ is chosen by maximizing the acquisition function. The acquisition function uses the information from the surrogate model and is usually comparatively cheap to optimize.
The target function $f$ is measured with independent Gaussian noise $\epsilon_{0,n}$,
\begin{equation}
    y_{0,n} = f(\ctrl_n)+\epsilon_{0,n}
    \label{eq:measurement_f}
\end{equation} 
and the GP surrogate model is then updated with the data point $(\ctrl_n, y_{0,n})$.

\subsection{Bayesian optimization with safety constraints}
When applying BO on physical systems, for example, when tuning a controller on a vehicle, certain parameters $\ctrl$ should not be queried as this can lead to safety violations. Safe BO algorithms aim to find the optimum of the target function $f$ without violating these safety constraints in any iteration.
Following \cite{Berkenkamp2020, Sui2018}, we formalize safety by $q \in \mathbb{N}^+$ constraints $g_i(\ctrl) \geq 0$, where $g_i:\ctrldomain \rightarrow \mathbb{R}$ for $i \in \mathcal{I}_g:=\{1,\ldots,q\}$.
The safety threshold is zero without loss of generality as the safety functions can be shifted accordingly.
These safety constraints have to hold at every iteration; that is, we require
\begin{equation}
    \label{eq:safety}
     \ g_i(\ctrl_n) \geq 0 \quad \forall i \in \mathcal{I}_g, n \in \{0, \ldots , N\}.
\end{equation}

For $g_1(\ctrl) = f(\ctrl)$ and $q=1$, this reduces to a single safety constraint on the objective function as in the seminal work \cite{sui15}. 
Just as for objective $f$, the safety functions $g_i$ are unknown and can only be queried through noisy observations
\begin{equation}
\label{eq:measurement_g}
    y_{i,n}=g_i(\ctrl_n)+\epsilon_{i,n} \quad \forall i \in \mathcal{I}_g
\end{equation}
where 
$\epsilon_{i,n}$ are noise variables.
Each constraint function $g_i$ is modeled as a separate GP, which is updated with the data $(\ctrl_n, y_{i,n})$.

The generic multiple constraints safe BO problem (approximately) solves the black-box optimization \eqref{eq:optproblem} through sequential queries $\ctrl_n$ evaluating \eqref{eq:measurement_f} and \eqref{eq:measurement_g}, while with high-probability not violating the safety constraints $g_i$ \eqref{eq:safety} during the whole optimization process. When used for controller tuning, $f$ represents the control objective, while $g_i$ represent safety constraints that must hold throughout the tuning. This will be detailed for the automotive controller tuning problem in Sec.~\ref{sec:application}.

To guarantee safety, safe BO algorithms rely on regularity assumptions on the performance and safety functions and the noise in \eqref{eq:measurement_g}. 
As discussed in Sec.~\ref{sec:related_work}, existing safety guarantees require assumptions that are hard to validate in practice (in particular, a known RKHS norm bound). We want to avoid such assumptions in this work.

\paragraph*{Problem statement}
We seek an algorithm to solve the multi-constraints safe BO problem given by \eqref{eq:optproblem}--\eqref{eq:measurement_g} with deterministic safety guarantee and relying on readily-interpretable geometric assumptions (no RKHS norm bound).

While existing multi-constraints safe BO algorithms SafeOpt-MC \cite{Berkenkamp2020} and StageOpt \cite{Sui2018} address the multi-constraint safety problem \eqref{eq:safety}, they rely on the knowledge of RKHS norm bounds for all $g_i$ (in addition to a Lipschitz assumption, which we also utilize) and yield probabilistic safety guarantees. 

\section{Multiple constraint Lipschitz-only Safe BO}
In this section, we propose the algorithm \emph{Multiple Constraints Lipschitz-only Safe BO (MCLoSBO)} (Algorithm~\ref{alg:cap}). This algorithm 
solves the problem above
based on assumptions that can be geometrically interpreted. Next, we introduce our assumptions, present our algorithm, and show how safety can be guaranteed.
\begin{figure}
    \centering
    \includegraphics{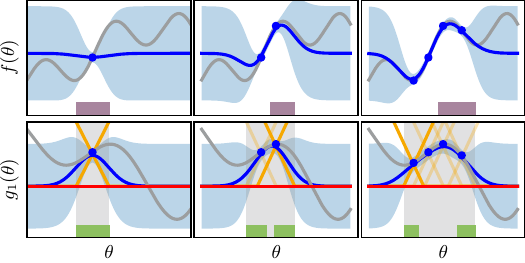}
    \vspace{-0.5cm}
    \caption{Illustration of MCLoSBO. The true performance function $f(\ctrl)$ (gray) and the true safety function $g_1(\ctrl)$ (gray) are queried in each iteration (blue dots). Based on the measurements, the safe set is determined following \eqref{safesetMCLoSBO} (orange). The functions are modeled by GPs (blue). The next query is chosen by evaluating the acquisition function in the maximizer set (violet) and expander sets (green).}
    \label{fig:MCLoSBO}
    \vspace{-0.5cm}
\end{figure}

\subsection{Assumptions}
The functional dependencies of $g_i$ on $\ctrl$ are unknown a priori. To provide safety guarantees, we need to make assumptions about the safety functions. To do this, we use the setting from \cite{fiedler2024safety}, which introduced Lipschitz-only Safe BO (LoSBO) for only one safety constraint (on the performance function). Here, we use the geometric assumption of Lipschitz continuity and extend it to the multiple constraints setting. 
Specifically, we assume:
\begin{assumption}
\label{ass:lipschitz}
All $g_{i}$ are $L_i$-Lipschitz continuous with Lipschitz constants $L_i$ so for all $\ctrl, \ctrl' \in \ctrldomain$, we have $|g_i(\ctrl)-g_i(\ctrl')| \leq L_i\|\ctrl-\ctrl'\|$.
\end{assumption}

This assumption can be geometrically interpreted as a Lipschitz cone at the safety function (Fig.~\ref{fig:MCLoSBO}, orange cones) and captures the maximal change of the function output for a given change in the inputs. In controller tuning applications, a reasonable estimate can be obtained, e.g., by a sensitivity analysis or based on simulation data.

In addition to a regularity assumption on $g_i$, we need an assumption on the noise in \eqref{eq:measurement_g}. In particular, we assume uniform bounded noise.
\begin{assumption}
\label{ass:noise}
For all $i \in \mathcal{I}_g$, there exists $E_i$ such that $|\epsilon_{i,n}| \leq E_i$ for all $n \geq 0$. 
\end{assumption}

A detailed discussion of these two assumptions, contrasting them with existing assumptions used in safe BO algorithms, can be found in \cite{fiedler2024safety}. 
Finally, we need an initial safe query $S_0 = \ctrl_0$ to start the optimization. This assumption is common in most safe BO algorithms \cite{sui15, Berkenkamp16, Sui2018, kirschner2019adaptive, Berkenkamp2020, fiedler2024safety}.

\subsection{Algorithm}
We propose the MCLoSBO algorithm, which extends LoSBO \cite{fiedler2024safety} to multiple safety constraints and leverages ideas from the SafeOpt-MC algorithm \cite{Berkenkamp2020}.
The two main challenges for a safe BO algorithm are to ensure safety and to trade off exploration and exploitation.
First, to guarantee safety in MCLoSBO, we define the safe set based on Assumptions \ref{ass:lipschitz} and \ref{ass:noise}
\begin{equation}
    \label{safesetMCLoSBO}
    S_n \coloneqq \bigcap\limits_{i\in \mathcal{I}_{g}} \bigcup\limits_{\ctrl\in S_{n-1}} \{\ctrl' \in \ctrldomain\:|\: y_{n,i}- E_i - L_i\|\ctrl-\ctrl'\| \geq 0\}.
\end{equation}
Under Assumptions \ref{ass:lipschitz} and \ref{ass:noise}, every $\ctrl_n \in S_n$ is safe according to \eqref{eq:safety}.

Next, we seek an evaluation criterion that expands the safe set while finding an optimum in the known safe set. 
Therefore, we introduce two subsets of the safe set, the set of potential maximizers $M_n \in S_n$ and the set of potential expanders $G_n \in S_n$. 
The definition of these sets corresponds to the definition in SafeOpt-MC. 
We use the tuning factor $\beta \in \mathbb{R}^+$ to define confidence intervals as
\begin{equation}
    \label{eq:confidence}
    Q_n(\ctrl,i) \coloneqq \left[ \mu_{n-1}(\ctrl,i) \pm \beta \sigma_{n-1}(\ctrl,i) \right].
\end{equation}
We define the lower uncertainty bound $l_{i,n}(\ctrl) \coloneqq \min{Q_n(\ctrl,i)}$ and upper uncertainty bound $u_{i,n}(\ctrl) \coloneqq \max{Q_n(\ctrl,i)}$.

In contrast to SafeOpt-MC, in MCLoSBO, the confidence intervals are not related to safety. 
The confidence bounds are only used for trading off the expansion and exploitation of the safe set.
Based on the confidence intervals, the set of potential expanders is defined as 
\begin{equation}
\label{eq:expandersmc}
    G_{n}(\ctrl) \coloneqq \{\ctrl \in S_{n}\:|\: e_{n}(\ctrl)  \geq 0\},
\end{equation}
with
\begin{equation}
\label{eq:expanders2mc}
    e_{n}(\ctrl) \coloneqq |\{\ctrl' \in \ctrldomain \setminus S_n \:|\: \exists i \in \mathcal{I}_g: u_{i,n}(\ctrl)-L_i\|\ctrl-\ctrl'\| \geq 0 \}|.
\end{equation}

The set of potential maximizers is
\begin{equation}
    M_{n} \coloneqq \{\ctrl \in S_{n}\:|\: u_{0,n}(\ctrl) \ge \max\limits_{\ctrl' \in S_{n}} l_{0,n}(\ctrl')\}.
\end{equation}

The acquisition function $\alpha:\ctrldomain \times \mathcal{I} \rightarrow \mathbb{R}$ is evaluated on the union of the maximizer and the expander sets. The function determines the next decision as the element with the highest uncertainty for all performance and safety functions. The acquisition function is defined by subtracting the lower and the upper uncertainty bound for each function. This yields
\begin{equation}
    \alpha(\ctrl,i)=u_{i,n}(\ctrl)-l_{i,n}(\ctrl).
\end{equation}

We can recover the setting of LoSBO by imposing $f=g_1$ and $q=1$. In this case, all assumptions on $g_1$ also hold for $f$. 
\begin{figure*}[t]
    \centering
    \includegraphics[width=0.9\textwidth]{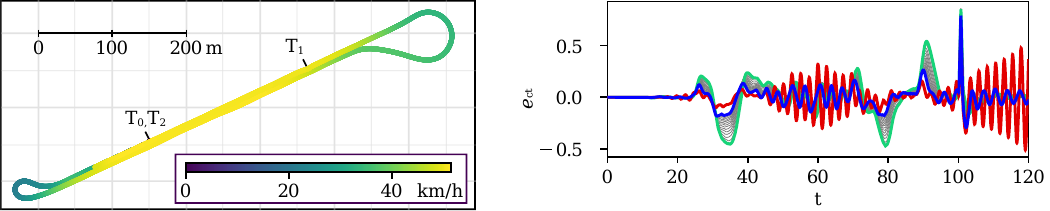}
    \caption{LEFT: Test track with the speed profile 
 of the experiment (color bar) in $\si{km/h}$. The simulation is started at the position $T_0$, and the disturbance is injected at $T_1=100$. The simulation ends at $T_2=120$ at the same position as $T_0$. This time scale corresponds to the time axis of the right plot. RIGHT: Plot of the cross-track error over time of an optimization with one parameter. The optimized controller is blue, the initial controller is green, and the intermediate optimization steps are grey. The red line represents parameters not queried in the optimization that violate $g_2$.}
    \label{fig:fasttrack}
\vspace{-0.5cm}
\end{figure*}

\subsection{Guarantees}
Combining the above assumptions, we can now provide safety guarantees for MCLoSBO, described in the following result. It is a direct adaption of Proposition
4.1 from \cite{fiedler2024safety} and its proof. 
\begin{proposition}
\label{prop:1}
Given the Assumptions \ref{ass:lipschitz} and \ref{ass:noise} for any choice of $\beta \in \mathbb{R}^+$, the MCLoSBO algorithm yields a sequence of safe inputs, i.e., $g_i(\ctrl_n) \geq 0$ for all $i=1, ..., q$ and $n \geq 1$.
\end{proposition}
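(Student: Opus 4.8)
The plan is to prove the stronger statement that \emph{every} point of the safe set is safe, namely that $g_i(\ctrl') \geq 0$ for all $i \in \mathcal{I}_g$ and all $\ctrl' \in S_n$, and then to note that each evaluated input is selected by maximizing the acquisition function over $M_n \cup G_n \subseteq S_n$, so $\ctrl_n \in S_n$ and the desired conclusion $g_i(\ctrl_n) \geq 0$ follows at once. I would establish the strengthened claim by induction on $n$. The base case is immediate: $S_0 = \ctrl_0$ is the initial safe seed, which is safe by assumption, so the only point it contains satisfies every constraint.

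For the inductive step I would fix $n \geq 1$, assume every point of $S_{n-1}$ is safe, and take an arbitrary $\ctrl' \in S_n$. Unpacking the definition \eqref{safesetMCLoSBO}, the outer intersection over $i$ forces that for \emph{each} constraint index $i \in \mathcal{I}_g$ there is a witness center $\ctrl^{(i)} \in S_{n-1}$ (taken from the corresponding union) satisfying $y_i(\ctrl^{(i)}) - E_i - L_i\|\ctrl^{(i)} - \ctrl'\| \geq 0$, where $y_i(\ctrl^{(i)})$ denotes the observation of $g_i$ collected at that center via the measurement model \eqref{eq:measurement_g}. The argument then reduces to a single chain of three inequalities per constraint: Assumption \ref{ass:noise} gives $g_i(\ctrl^{(i)}) \geq y_i(\ctrl^{(i)}) - E_i$; Assumption \ref{ass:lipschitz} gives $g_i(\ctrl') \geq g_i(\ctrl^{(i)}) - L_i\|\ctrl^{(i)} - \ctrl'\|$; and chaining these with the defining inequality of $S_n$ yields $g_i(\ctrl') \geq y_i(\ctrl^{(i)}) - E_i - L_i\|\ctrl^{(i)} - \ctrl'\| \geq 0$. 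Since $i$ was arbitrary, $\ctrl'$ satisfies all constraints, closing the induction.

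I expect the main obstacle to be conceptual rather than computational: one must argue carefully that the cone centers $\ctrl^{(i)}$ over which the union in \eqref{safesetMCLoSBO} ranges are points at which a noisy observation has \emph{actually} been collected, so that the bound $E_i$ legitimately certifies $g_i(\ctrl^{(i)}) \geq y_i(\ctrl^{(i)}) - E_i$. This is exactly where the induction does the work: it guarantees that the safe set, and hence the pool of admissible evaluation points, only ever contains genuinely safe inputs, so every measurement feeding a later safe-set update was taken at a previously certified, safe location. I would also make explicit that the multiple-constraint structure is absorbed entirely by the outer intersection: the witness center $\ctrl^{(i)}$ may differ across constraints, but because each $g_i$ is certified by its own Lipschitz cone, the conjunction $g_i(\ctrl') \geq 0$ for all $i$ still holds simultaneously. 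With these points addressed, everything else is the constraint-wise specialization of the LoSBO argument of \cite{fiedler2024safety}.
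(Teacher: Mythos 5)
Your proposal is correct and follows essentially the same route as the paper's proof: a strengthened induction showing that every point of $S_n$ satisfies all constraints, where the inductive step chains the noise bound (Assumption \ref{ass:noise}), the Lipschitz bound (Assumption \ref{ass:lipschitz}), and the defining inequality of the safe set \eqref{safesetMCLoSBO}. If anything, your explicit treatment of per-constraint witness centers $\ctrl^{(i)} \in S_{n-1}$ is slightly more careful than the paper's notation, which writes the same chain using a single measured query point for all constraints.
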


\begin{proof}
It is enough to show that for all $ n \geq 0$ and $\ctrl \in S_n$, we have $g_i(\ctrl)\geq 0, \forall i \in \mathcal{I}_g = {1, ..., q}$. We use induction on $n$: For $n=0$, this follows by assumption. Now let $n \geq 1$ and assume $g_i(\ctrl)\geq 0, \forall i \in \mathcal{I}_g$ holds for all $\ctrl \in S_{n-1}$. Let $\ctrl$ be arbitrary in $S_n = \bigcap\limits_{i\in \mathcal{I}_{g}} \bigcup\limits_{\ctrl\in S_{n-1}} \{\ctrl' \in \ctrldomain\:|\: y_{i,n-1}- E_i -L_i\|\ctrl-\ctrl'\| \geq 0\}$. If $\ctrl \in S_{n-1}$, then $g_i(\ctrl)\geq 0$ follows from the induction hypothesis. Otherwise, for all $i \in \mathcal{I}_g$ we have 
\begin{multline}
    g_i(\ctrl)=g_i(\ctrl_n)+\epsilon_{i,n}-\epsilon_{i,n}+g_i(\ctrl)-g_i(\ctrl_n) \\
   \geq y_{i,n} - E_i - L_i\|\ctrl-\ctrl_n\| \geq 0, 
\end{multline}
where we used the $L_i$-Lipschitz continuity of $g_i$ and the noise bounds $|\epsilon_{i,n}| \leq E_i$ in the first inequality and the definition of $S_n$ in the second inequality. Altogether, we have that $\ctrl$ is in $S_{n}$. 
\end{proof}
\vspace{-0.3cm}
\begin{algorithm}
\caption{MCLoSBO}
\label{alg:cap}
\begin{algorithmic}[1]
\Require Domain $\ctrldomain$, $\mathcal{GP}_\ctrldomain(0,k)$, tuning factor $\beta \in \mathbb{R}^+$, Lipschitz constant $L_i$ for all $i \in \mathcal{I}_g$, noise bound $E_i$ for all $i \in \mathcal{I}_g$, initial safe set $S_0 \subseteq \ctrldomain$, initial data set $\mathcal{D}_0$
\For{$n=1,... $}
    \State \small $S_n \gets \bigcap\limits_{i\in \mathcal{I}_{g}} \bigcup\limits_{\ctrl \in S_{n-1}} \{\ctrl' \in \ctrldomain\:|\: y_{i, n-1}- E_i -L_i\|\ctrl-\ctrl'\| \geq 0\}$   
    \State \small $M_{n} \gets \{\ctrl \in S_{n}\:|\: u_{0,n}(\ctrl)  \geq \max\limits_{\ctrl' \in S_{n}} l_{0,n}(\ctrl')\}$
    \State \small $G_{n} \gets \{\ctrl \in S_{n}\:|\: e_{n}(\ctrl)  \geq 0\}$
    \State \small $\ctrl_{n} \gets \argmax \limits_{\ctrl \in G_{n} \cup M_{n}} \max\limits_{i \in \{0\} \cup \mathcal{I}_g} \alpha_n(\ctrl,i) $
    \State \small $y_{0,n} \gets f (\ctrl_n)+\epsilon_{0,n}$ 
    \State \small $y_{i,n} \gets g_i(\ctrl_n)+\epsilon_{i,n}$ for all $i \in \mathcal{I}_g$
    \State \small Augment data $\mathcal{D}_n = \{\mathcal{D}_{n-1}, \{(\ctrl_n, y_n)\}\}$
    \State \small Update GP with $\mathcal{D}_n$
\EndFor
\end{algorithmic}
\end{algorithm}
\vspace{-0.5cm}

\subsection{Practical adaptations}
Motivated by the controller tuning application, we introduce practical adaptations that can be included in the MCLoSBO algorithm. 
\subsubsection{Asynchronous optimization}
In many applications, there is no interval between consecutive experiments, and halting to await the optimizer's next query is inefficient. Hence, employing asynchronous observation with pending experiments, a variant of batch optimization, is a viable solution \cite{garnett}. 
We define a batch acquisition function $\alpha(\ctrl, \Tilde{\ctrl})$, where $\Tilde{\ctrl}$ is the fixed set of parameters that is currently being tested in the experiment but has no performance measure yet. 
When employing maximum variance as an acquisition function, we can incorporate a virtual data point $(\Tilde{\ctrl}, \mu_n(\Tilde{\ctrl}))$ into the GP before evaluating the acquisition function as the variance $\mathbf{\sigma_n}(\ctrl)$ is independent of $\mathbf{y}$. 
To ensure safety, we determine the safe set solely on the basis of the points with measured performance.  

\subsubsection{Hyperparameter optimization}
Optimization with safe BO algorithms requires selection of a GP kernel \cite{garnett}. In practice, selecting this kernel is not always straightforward. 
While some function properties, such as stationarity and ``roughness'', can be inferred from the context, specific values of the free hyperparameters, like lengthscales and outputscales, are difficult to determine. 
To address this, hyperparameter optimization can be applied to optimize the free hyperparameters based on measured data, for example, by using a marginal maximum likelihood approach \cite{rasmussen}.
The model selection effort increases substantially when modeling multiple performance and constraint functions.
Hyperparameter optimization for SafeOpt-MC as applied in \cite{doersche2021, König2021} led to safety violations, since safety in SafeOpt-MC directly depends on the covariance function and thus was highly influenced by the approximated model selection.
In contrast, safety in MCLoSBO is not dependent on the covariance function, as it only relies on the upper bound of the Lipschitz constant and the noise bound. 
Therefore, we can optimize the free hyperparameters during the tuning process, which reduces the effort required for the a priori model selection.

\section{Application Study}
\label{sec:application}
In this section, we evaluate the MCLoSBO algorithm for controller tuning in an automotive application. First, we assess the algorithm in simulation to evaluate the impact of the practical adaptations and to benchmark it against SafeOpt-MC.\footnote{The repository is available at \url{https://github.com/Data-Science-in-Mechanical-Engineering/mclosbo}} Second, we demonstrate MCLoSBO on a test vehicle.

\begin{figure*}[t]
    \centering
    \includegraphics[width=0.9\linewidth]{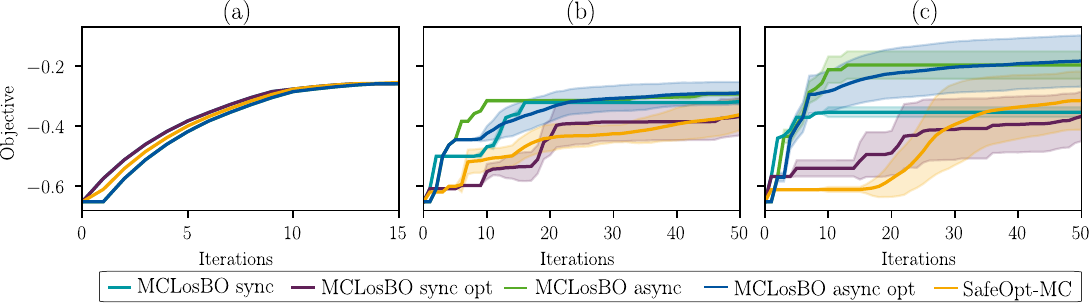}
    \caption{Simulation results for optimizing parameter sets of one (a), two (b), and three (c) parameters. In (a), the green and the blue line are overlapping.}
    \label{fig:results_sim}
    \vspace{-0.5cm}
\end{figure*}

\subsection{Experimental setup}
 We focus on a vehicle trajectory tracking control task, aiming to follow a pre-planned trajectory as closely as possible. This control task can be divided into two distinct tasks: longitudinal and lateral control. In this work, we use a given longitudinal controller and focus on optimizing the lateral controller. 
 For this, we define the cross-track error $e_{\mathrm{ct}}$ as the offset between the midpoint of the rear-axis and the reference point on the trajectory, and the course angle error $e_{\mathrm{ca}}$ as the difference between vehicle orientation and the trajectory orientation at the reference point. We use a modified version of the rear-axis tracking controller from \cite{Paden} with \expm{three} free controller parameters. As BO considers a black box optimization problem, we do not need information on the specific controller structure.

As presented in Fig. \ref{fig:fasttrack}, the test track is a closed course with two straights and two turns of different radii. The speed on the straights is about \expm{$55 \si{km/h}$} and it is reduced during cornering. When the vehicle is on the longest straight, a virtual disturbance on the cross-track error is injected at time $T_1$ to evaluate the controller stability. Accordingly, the measurement episode is separated into $[T_0,T_1)$, which includes the measurements on one straight and the turns; and $[T_1,T_2)$, which includes measurements for a fixed duration after the disturbance injection. 

To evaluate the controller's path-tracking performance, we formulate the objective function as a sum of the integrated tracking errors in lateral and angular directions normalized over time, plus the maximum cross-track error: 
\begin{equation}
    f(\ctrl)= \frac{\int_{T_0}^{T_1} |e_{\mathrm{ct}}(t)|+|e_{\mathrm{ca}}(t)|\, \mathrm{d}t}{T_1-T_0}  + \max_{t \in [T_0,T_1)} |e_{\mathrm{ct}}(t)|.
\end{equation}
We formulate \eqref{eq:optproblem} as a maximization problem, but our goal is to minimize the objective function, so we invert its sign for the optimization. 

To prevent the vehicle from leaving the track, our first safety constraint sets a threshold of \expm{$2 \si{m}$} for the maximum cross-track error for $g_1$. For the second constraint, we evaluate the robustness of the controller. Here, a virtual disturbance of \expm{$1 \si{m}$} cross-track error is injected at time $T_1$ when the vehicle drives straight. This emulates side wind or sudden steering motions both of which may cause unstable controller behavior. To measure this behavior, we examine the maximum yaw rate $\dot{\psi}$ after the disturbance injection. To reliably detect oscillation and aggressive controller behavior, we set the threshold for this criterion at $0.2\si{rad/s}$ as our second safety constraint.
Fig. \ref{fig:fasttrack} shows a safety violation of this constraint in red. The vehicle tracks the trajectory closely when cornering but oscillates when driving straight. This is a hazard when tuning parameters on the vehicle. In summary, the two safety constraints are:
\begin{align}
    g_1(\ctrl) &= -\max_{t \in [T_0,T_1)} |e_{\mathrm{ct}}(t)| + 2 \si{m} \\
    g_2(\ctrl) &= -\max_{t \in [T_1,T_2)} |\dot{\psi}(t)| + 0.2 \si{rad/s}.
\end{align}

The experimental setup imposes several additional restrictions. Firstly, stopping the vehicle after each round to adjust parameters is impractical; therefore, we apply the asynchronous optimization framework to prevent interruptions during execution. Secondly, the experiments themselves are expensive, so we assume a limited budget of iterations.

In our setting, the domain $\Theta$ is a rectangular box and the bounds based on expert knowledge and experience with the controller. We normalize the domain for each parameter to $[0,1]$ for all experiments. 

In all experiments, in simulation and on the vehicle, we model the unknown functions with Matern kernels \cite{rasmussen} with roughness parameter \expm{$\nu=5/2$}. Hyperparameters are lengthscale $l$, signal variance $\sigma_f$, and the noise variance $\sigma_d$. Table~\ref{tab:hyper} lists all algorithm parameters used for the experiments. 
We scale the objective function at each iteration using a Min-Max scalar with range $[0,1]$ on the observed values. For the hyperparameter optimization, we use a Gamma prior with $\mathbf{\Lambda} \sim \Gamma(a,b)$  with $\Gamma(3,10)$ on the length scales and $\Gamma(3,2)$ on the output scales. The Lipschitz constants are estimated by using a point grid in combination with domain knowledge.
\begin{table}[b]
    \vspace{-0.3cm}
    \centering
    \begin{tabular}{SSSSSS} \toprule
        {} & {$l$} & {$\sigma_f$} & {$\sigma_d$} & {$E_i$} & {$L_i$} \\ \midrule
        {$f$}  & 0.2 & 1 & 0.03 & 0.03 & /\\
        {$g_1$}  & 0.2  & 1 & 0.1 & 0.1 & 10\\
        {$g_2$}  & 0.2  & 0.2 & 0.01 & 0.1 & 3\\
        \bottomrule
    \end{tabular}
    \caption{Hyperparamter choices in the simulation study}
    \label{tab:hyper}

\end{table}
\subsection{Simulation results}
We perform simulation experiments with \expm{one, two, and three parameters}. We compare MCLoSBO with and without hyperparameter optimization in a standard setting (synchronous) and an asynchronous setting with pending experiments. As a baseline, we use the SafeOpt-MC algorithm from \cite{Berkenkamp2020}. For our experiments, key metrics include the maximum attained by the optimization and the number of safety violations.  For each experiment, we perform \expm{100} runs starting from the initial baseline controller. 

Prior work \cite{Wischnewski2019, doersche2021} has reported that safety violations can occur with SafeOpt-type algorithms, as is also shown and analyzed in detail in \cite{fiedler2024safety}. However, we observe no safety violations in this study, neither for SafeOpt-MC, nor for MCLosBO. In general, whether or not safety violations may occur in SafeOpt-MC depends on the kernel and hyperparameter choices, while MCLosBO provides safety guarantees based on geometric assumptions and irrespective of hyperparameter choices.

With regards to performance, the standard synchronous version of MCLosBO achieves similar results to SafeOpt-MC, while the asynchronous versions of MCLosBO perform better (see Fig.~\ref{fig:results_sim}).  In particular, when tuning one parameter (Fig.~\ref{fig:results_sim} (a)), all algorithms perform almost equally, as this optimization task is simple. For two and three parameters (Fig.~\ref{fig:results_sim} (b) and (c)), the adapted version with asynchronous optimization leads to significantly better results. 
Hyperparameter optimization results in a higher variance of the found optimum. In the synchronous setting, hyperparameter optimization results in a worse performance in this study. However, in the asynchronous setting, we observe slightly better performance in the setting with two parameters. 
Overall, this simulation study demonstrates that optimizing all three controller parameters can yield better optima.

\subsection{Application study: Tuning on the vehicle}
For the application of MCLosBO on the test vehicle (see Fig.~\ref{fig:summary}), we apply the same experimental setup used in simulation. All experiments are performed on an \expm{Opel Astra SportsTourer 1.4 ecoflex}. 
Our optimization takes a sequential approach. First, we optimize one parameter based on a conservative initial set. Second, we choose a reasonable safe set from the optimization with one parameter and optimize all three controller parameters. The Lipschitz constant of this experiment is determined by expert knowledge and the insights gained from simulation. In the first experiment, we use $L_1 = 4$ and $L_2 = 1.5$ and in the second $L_1 = 10$ and $L_2 = 1.5$. 

We note that during the experiments, measurement errors occurred due to flaws in the communication protocol. The falsely transmitted data points, which were recovered from logging data after the experiment, erroneously influenced the optimization. Nevertheless, this represents a realistic testing use case, and the method proved robust to these false measurements.

In the first experiment, we evaluate 15 iterations. The results are illustrated in Fig \ref{fig:plot_real_experiments}. We explore nearly the whole safe set and find an optimum close to the boundary of the safe set. No safety violations occurred, and the objective of the best controller parameters is $70\%$ better than the initial performance. In the second experiment, we evaluate 13 iterations for three parameters. No safety violations occurred, and the found optimum improved by about $28\%$ in comparison to the initial safe set of this optimization run. 

\begin{figure}
    \centering \includegraphics[trim = 0mm 4mm 0mm 0mm, clip, width=0.9\linewidth]{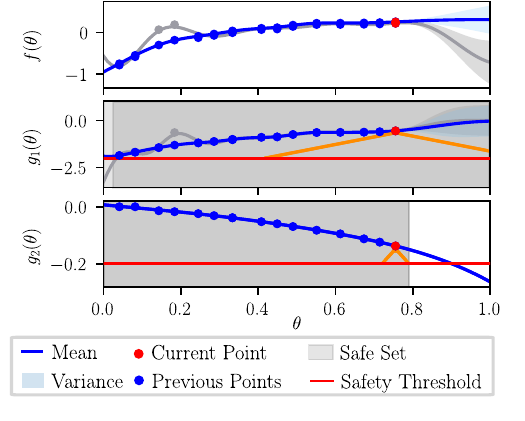}
    \caption{Results of the first vehicle experiment. No safety violations occur. The initial safe parameter is $\theta_0 = 0.3$. The red point is the last point of the optimization. These plots are generated in real time during the test and can be inspected by the test copilot. The gray points experiments represent the falsely measured points corrected based on the logging data.}
    \label{fig:plot_real_experiments}
    \vspace{-0.5cm}
\end{figure}

Overall, these experiments demonstrate that MCLoSBO can safely tune a controller directly on hardware. Despite being able to obtain solid insights from simulations, every application engineering task requires hardware tuning since there is always a gap between simulation and reality. 

\section{CONCLUSIONS}
In this work, we presented the MCLosBO algorithm, a safe BO algorithm with deterministic and interpretable safety guarantees. We applied the Lipschitz-safe algorithm to lateral trajectory-tracking control. While we selected this particular use case for demonstration purposes, the presented method can be used in a broad spectrum of learning problems where safety constraints arise. The simulation experiments demonstrated that MCLoSBO provides a reasonable safety mechanism and could outperform the baseline SafeOpt-MC. In a real vehicle tuning scenario, a controller was found that outperformed the given baseline while adhering to safety constraints, all without strong reliance on expert knowledge.

\addtolength{\textheight}{-15cm}  




\section*{ACKNOWLEDGMENT}
We thank P.~Brunzema, C.~Fiedler, E.~Sapozhnikova, F.~Solowjow, D.~Stenger, and A.~von~Rohr for very helpful discussions on this research and manuscript, and S.~Azirar for support with plots and implementing experiments.

 
\AtNextBibliography{\footnotesize}

\printbibliography

\end{document}